\numberwithin{equation}{section} \textwidth=135mm \textheight=200mm
\newcommand{\bbC}{\mathbb C}
\renewcommand{\epsilon}{\varepsilon}
\newcommand{\be}{\begin{equation}}
\newcommand{\ee}{\end{equation}}
\newcommand{\R}{\mathbb{R}}
\newcommand{\T}{\mathbb{T}}
\newcommand{\Z}{\mathbb{Z}}
\newtheorem{theorem}{Theorem}[section]
\newtheorem{lemma}[theorem]{Lemma}
\newtheorem{corollary}[theorem]{Corollary}
\newtheorem{hypothesis}[theorem]{Hypothesis}
\newtheorem{definition}[theorem]{Definition}
\newtheorem{proposition}[theorem]{Proposition}
\newtheorem{remark}[theorem]{Remark}
\date{\today}
\begin{document}

\title[Saidakhmat N.Lakaev,\,\, Maslina Darus,\,\, Said T. Dustov]{Threshold phenomenon
for a family of the Generalized Generalized Friedrichs models with
the perturbation of rank one}
\author{Saidakhmat N.\,Lakaev,\,\, Maslina Darus,\,\,Said T. \,Dustov}
\maketitle
\begin{abstract}
A family $H_\mu(p),$ $\mu>0,$ $p\in\T^3$  of the Generalized
Firedrichs models with the perturbation of rank one, associated to a
system of two particles, moving on the three dimensional lattice
$\mathbb{\Z}^3,$ is considered. The existence or absence of the
unique eigenvalue of the operator $H_\mu(p)$ lying outside the
essential spectrum, depending on the values of $\mu>0$ and $p\in
U_{\delta}(p_{\,0})\subset\T^3$ is proven. Moreover, the analyticity
of associated eigenfunction is shown.
\end{abstract}

\section*{Introduction}

In celebrated work  \cite{MK-BS} of B.Simon and M.Klaus it is
considered a family of Schr\"{o}dinger ope\-ra\-tors $H=-\Delta+\mu
V$ and, a situation where as $\mu$ tends to $\mu_0$ some eigenvalue
$e_i(\mu)$ tends to $0,$ i.e., as $\mu$ tends to $\mu_0$ an
eigenvalue is absorbed into continuous spectrum, and conversely, as
$\mu$ tends to $\mu_0+\epsilon, \epsilon>0 $ continuous spectrum
{\it gives birth} to a new eigenvalue. This phenomenon in
\cite{MK-BS} is called {\it coupling constant threshold.}

In \cite{ALMM} for a wide class of  two-body energy operators
$H_2(k)$ on the $d$-dimensional lattice $\Z^d$, $d\ge 3$, $k$
 being the two-particle quasi-momentum, it is proven that if the following two
assumptions (i) and (ii) are satisfied, then for all nontrivial
values  $k$, $k\ne 0$, the discrete spectrum of  $h(k)$ below its
threshold  is non-empty. The assumptions are: (i) the two-particle
Hamiltonian $H_2(k)$ associated to the zero value of the
quasi-momentum has either an eigenvalue or a virtual level at the
bottom of its essential spectrum and (ii) the one-particle free
Hamiltonians in the coordinate representation generate  positivity
preserving semi-groups.

In \cite{LH} the Hamiltonian of a system of two identical quantum
mechanical particles (bosons) moving on the $d$-dimensional lattice
$\Z^d,d\geq3$ and interacting via zero-range repulsive pair
potentials is considered. For the associated two-particle
Schr\"{o}dinger operator $H_{\mu}(K),$ $K\in \T^d=(-\pi,\pi]^d$
there existence of coupling constant threshold $\mu=\mu_0(K)>0$ is
proven:the operator has non eigenvalue for any $0<\mu<\mu_0,$ but
for each  $\mu>\mu_0$ it has  a unique eigenvalue $z(\mu,K)$  above
the upper edge of the essential spectrum of $H_{\mu}(K).$ Moreover
asymptotics for $z(\mu, K)$ are found, when $\mu$ approaches to
$\mu_0(K)$  and $K\to 0.$

Notice that in \cite{MK-BS} the existence of a coupling constant
threshold has been assumed, at the same time in \cite{LH} the
coupling constant threshold is definitely found by the given data of
the Hamiltonian.

Notice also that for the  Hamiltonians of a system of two identical
particles moving on $\R^2$ or $\Z^2$  the coupling constant
threshold vanishes, if particles are bosons and the coupling
constant threshold is positive, if particles are fermions.

In the present paper, a family of Generalized Friedrichs models
under rank one perturbations $H_{\mu}(p)$, $ \mu>0, p\in
U_{\delta}(p_{\,0})\subset \T^3,$ where $U_{\delta}(p_{\,0})$ is a
$\delta$-neighborhood of the point $p=p_{\,0}\in\mathbb{T}^3,$
associated to a system of two particles on the three-dimensional
lattice $\Z^3$ interacting via pair local {\it repulsive potentials}
is considered.

If parameters of the Generalized Friedrichs model satisfy some
conditions then there exists a {\it coupling constant threshold}
$\mu=\mu(p)>0$ that the operator has non eigenvalue for any
$0<\mu<\mu(p),$ but for any $\mu>\mu(p)$ there is a unique
eigenvalue $z(\mu,p)$ of $H_{\mu}(p)$, which lie above the threshold
$z=M(p)$ of the operator $H_\mu(p),$  $p\in U_{\delta}(p_{\,0}).$
For the associated eigenfunction an explicit expression is found and
its analyticity is proven.

We have found necessary and sufficient conditions, in order to the
threshold $z=M(p)$ was an eigenvalue or a resonance (virtual level)
or a regular point of the essential spectrum of $H_\mu(p),$ $p\in
U_{\delta}(p_{\,0}).$

One of the reasons to consider the family of the Generalized
Friedrichs models interacting via pair local {\it repulsive
potentials} is as follows: the family of the Generalized Friedrichs
models generalizes and involves some important behaviors as of the
Shr\"odinger operators associated to the Hamiltonians for systems of
two arbitrary particles moving on $\R^d$ or $\Z^d,d\geq 1,$ as well
as, the Hamiltonians for systems of both bosons and fermions
\cite{LAQ12},\,\cite{Lak83, Lak86},\,\cite{LMfa04}.

Furthermore, as have been stated in \cite{DenschlagDaley,RBAP} that
throughout physics stable composite objects are usually formed by
way of attractive forces, which allow the constituents to lower
their energy by binding together. The repulsive forces separates
particles in free space. However, in structured environment such as
a periodic potential and in the absence of dissipation, stable
composite objects can exist even for repulsive interactions.

The family of the Generalized Friedrichs models theoretically
adequately describes this phenomenon relating  to {\it repulsive
forces,} since the two-particle discrete Schr\"odinger operators are
the special case of this family.

The Generalized Friedrichs model, i.e., the case, where the
non-perturbed operator $H_0$ is a multiplication operator by
arbitrary function with Van Hove singularities(critical points)
defined on the closed interval $[a,b]$ has been considered in
\cite{Lakaev86}. In this case, the multiplicity of continuous
spectrum is not constant.

Generalized Friedrichs model with given number of eigenvalues
embedded in the continuous spectrum has been constructed
\cite{AIL95}.

The Generalized Friedrichs models appear mostly in the problems of
solid state physics \cite{Mogilner, RSIII}, quantum mechanics
\cite{Fri48},
 and quantum field theory \cite{Fri08, MalMin} and in
general settings have been studied in \cite{Lak83, Lak86}.

In \cite{ALzMJMAA} the family of Generalized Friedrichs models under
rank one perturbations $H_{\mu}(p),\,\mu>0,\,p\in(-p,p]^3,$
associated to a system of two particles on the three-dimensional
lattice $\Z^3$ is considered. In some special case of multiplication
operator and under the assumption that the operator
$H_{\mu}(0),0\in\T^3$ has a {\it coupling constant threshold
$\mu_0(0)>0$} the existence of a unique eigenvalue below the bottom
of the essential spectrum of $H_{\mu_0(0)}(p),p\in(-p,p]^3$ for all
non-trivial values of $p\in\T^3$  has been proved.

In \cite{LAQ12} for a family of the Generalized Friedrichs models
$H_{\mu}(p)$, $ \mu>0, p\in\T^2$ either the existence or absence of
a positive coupling constant threshold $\mu=\mu(p)>0$ depending on
the parameters of the model has been proved.

In \cite{LAQ} it is established  an expansion of the threshold
eigenvalue $E(\mu,p)$ and resonance in some neighborhood of the
point $\mu=\mu(p)$ .

In \cite{LMfa04} a special family of the Generalized Friedrichs
models has been  considered and the existence of eigenvalues for
some values of quasi momentum $p\in\mathbb{T}^d$ of the system,
lying in a neighborhood of some $p_{\,0}\in\mathbb{T}^d,$ has been
proved.

\section{Preliminary notions and  assumptions. Formulation of the main results.}

Let $\mathbb{Z}^3$ be the three-dimensional hybercubic lattice and
$$\mathbb{T}^3=(\mathbb{R}/2\pi\mathbb{Z})^3=(-\pi,\pi]^3$$ be the three-dimensional
 torus (Brillion zone), the dual group of $\mathbb{Z}^3.$

Note that operations addition and multiplication by number of the
elements of torus ${\mathbb{T}^3} \equiv(-\pi,\pi]^3
\subset\mathbb{R}^3 $ is defined as operations in $\R^3$ by the
module $(2\pi \mathbb{Z}^3).$

Let $L^2(\mathbb{T}^3)$ be the Hilbert space of square-integrable
functions defined on the torus $\mathbb{T}^3$ and  $\mathbb{C}^1$ be
one-dimensional complex Hilbert space.

We consider a family of the Generalized Friedrichs models acting in
$L^2(\T^3)$ as follows:
$$H_{\mu}(p)=H_0(p)+\mu\Phi^*\Phi,\,\,\mu>0.$$
Here $$\Phi:L^2(\T^3)\to \mathbb{C}^1,\quad \Phi
f=(f,\varphi)_{L^2(\T^3)},$$ $$\Phi^*: \mathbb{C}^1\to
L^2(\T^3),\quad \Phi^* f_0=\varphi(q) f_0,$$ where
$(\cdot,\cdot)_{L^2(\T^3)}$ -- inner product in $L^2(\T^3)$ and
$H_0(p),$ $p\in\T^3$ is a multiplication operator by a function
$w_p(\cdot):=w(p,\cdot),$ i.e.
\begin{equation}\label{h0}
(H_0(p)f)(q)=w_p(q)f(q),\quad f\in L^2(\T^3).
\end{equation}

Note that for any$f\in L^2(\T^3)$ and $g_0\in\mathbb{C}^1$ the
equality $$(\Phi f,g_0)_{\mathbb{C}^1}=(f,\Phi^*g_0)_{L^2(\T^3)}$$
holds. The following assumption will be needed throughout the paper.
\begin{hypothesis}\label{Hyp1}
We assume that the following assumptions are satisfied:
\begin{itemize}
\item[(i)] the function $\varphi(\cdot)$ is nontrivial, real-analytic
function on $\T^3;$
\item[(ii)] the function $w(\cdot,\cdot)$ is real-analytic function on
$(\T^3)^2=\T^3\times \T^3$ and has a unique non degenerated maximum
at $(p_{\,0}, q_0)\in (\T^3)^2$.
\end{itemize}
\end{hypothesis}

The perturbation $v=\Phi^{\ast}\Phi$ is positive operator of rank 1.
Consequently, by the well-known Weyl theorem \cite{RSIV} the
essential spectrum fills the following segment on the real axis:
$$
\sigma_{ess}(H_\mu(p))=\sigma_{ess}(H_0(p))=[m(p),\,M(p)],
$$
where
\begin{equation*}
m(p)=\min_{q\in \T^3}w_p(q),\quad M(p)= \max_{q\in \T^3}w_p(q).
\end{equation*}

By Hypothesis \ref{Hyp1} there exist such $\delta$-neighborhood
$U_{\delta }(p_{\,0})\subset \T^3$ of the point $p=p_{\,0}\in \T^3$
and analytic vector function $\mathbf{q}_0:U_{\delta}(p_{\,0})\to
\T^3$ that for any $p\in U_{\delta}(p_{\,0})$ the point
$\mathbf{q}_0(p)=(q_0^{(1)}(p), q_0^{(2)}(p), q_0^{(3)}(p))\in\T^3$
is a unique non degenerated maximum of the function $w_p(\cdot)$
(see Lemma \ref{non degenerate}

Moreover, the following integral
$$\frac{1}{\mu(p)}=\int\limits_{\T^3}
\frac{\varphi^2(s)ds}{M(p)-w_p(s)}>0$$ exists (see Lemma
\ref{integral}).

The positive number $\mu(p)>0$  is called {\it coupling constant
threshold}.

\begin{definition}\label{defregular}
The threshold $z=M(p)$  is called a regular point of the essential
spectrum of the operator $H_{\mu}(p),$ if the equation
$H_{\mu}(p)f=M(p)f$ has only trivial solution $f \in L^2(\T^3)$.
\end{definition}
Let $L^1(\T^3)$ be the Banach space of integrable functions on
$\T^3$.
\begin{definition}\label{defvirtual}
The threshold $z=M(p)$ is called a $M(p)$ energy resonance (virtual
level) of the essential spectrum of the operator $H_{\mu}(p),$ if
the equation $H_{\mu}(p)f=M(p)f$ has a non-trivial solution $f\in
L^{1}(\T^3)\setminus L^2(\T^3).$  The solution $f$ is called {\it
resonance state} of the operator $H_{\mu}(p).$
\end{definition}
\begin{remark}
The set $\mathbb{G}$ of $\mu>0$,\, for which the threshold is a
regular point of the essential spectrum $\sigma_{ess}(H_\mu(p))$ of
$H_{\mu}(p),$ is an open set in $(0,+\infty)$. More  precisely,
$\mathbb{G}=(0,+\infty)\backslash \{\mu(p)\}$.
\end{remark}
\begin{remark}\label{regular point}
If the threshold $z=M(p)$ is a regular point of $H_{\mu}(p)$ then
the number of eigenvalues of the operator  $H_{\mu}(p)$ above the
threshold $M(p)$ does not change under small perturbations of
$\mu\in \mathbb G$ (see items $(i)$,\, $(ii)$ and $(iii)$ of Theorem
\ref{main theorem}).
\end{remark}

In the following theorem we have found a necessary  and sufficient
conditions for existence of a unique eigenvalue $E(\mu,p),$ lying
above the threshold of the essential spectrum of  $H_{\mu}(p),$
$p\in U_{\delta}(p_{\,0}).$  We  prove that for a fixed $p\in
U_{\delta}(p_{\,0}),$ the function $E(\cdot,p)$ is analytic in
$(\mu(p),+\infty).$ Moreover for the associated eigenfunction an
explicit expression is found and its analyticity is proven.
Furthermore, in the case $\mu=\mu(p)>0$, it is proven that the
threshold $M(p)$ of the essential spectrum is either a $M(p)$ energy
resonance or eigenvalue for the operator $H_\mu(p),$ $p\in\T^3.$

\begin{theorem}\label{main theorem}
Assume Hypothesis \ref{Hyp1} and  $p\in U_{\delta}(p_{\,0}).$ Then
the following assertions are true.
\begin{enumerate}
\item[{\rm(i)}] The operator
$H_\mu(p)$ has a unique eigenvalue  $E(\mu,p)$ lying above the
threshold $M(p)$ of the essential spectrum if and only if
$\mu>\mu(p)$. The function $E(\cdot,p)$ is monotonously increasing
real-analytic function in the interval $(\mu(p),+\infty )$ and the
function $E(\mu,\cdot)$ is real-analytic in $U_{\delta}(p_{\,0}).$
The associated eigenfunction
$$\Psi(\mu;p,q,E(\mu,p))=\frac{C\mu
\varphi(q)}{E(\mu,p)-w_p(q)}$$ is analytic on $\T^3,$ where $C\neq
0$ is normalization factor. Moreover, the mappings
$$\Psi: U_{\delta}(p_{\,0})\to L^2(\T^3),\quad p \mapsto
\Psi(\mu;p,q,E(\mu,p))\in L^2(\T^3)$$ and
$$\Psi: (\mu(p),+\infty)\to L^2(\T^3),\quad \mu \mapsto
\Psi(\mu;p,q,E(\mu,p))\in L^2(\T^3)$$ are vector-valued analytic
functions in $U_{\delta}(p_{\,0})$ and $(\mu(p),+\infty),$
respectively .

\item[{\rm (ii)}]
The operator $H_\mu(p)$ has none eigenvalue in  semi-infinite
interval $(M(p), \infty)$ if and only if $0<\mu<\mu(p).$

\item[{\rm (iii)}] The threshold $z=M(p)$ is a regular point  of the
operator $H_\mu(p)$ if and only if $\mu\ne\mu(p)$.

\item[{\rm (iv)}] The threshold $z=M(p)$ is a $M(p)$ energy resonance of the
operator $H_\mu(p)$ if and only if $\mu=\mu(p)$ and
$\varphi(\mathbf{q}_0(p))\neq 0.$ The associated resonance state is
of the form
$$f(q)= \frac{C\mu(p) \varphi(q)}{M(p)-w_p(q)},$$ where $C\neq 0$ is a normalizing constant
 and $f\in L^1(\T^3)\setminus L^2(\T^3).$

\item[{\rm (v)}]
The threshold $z=M(p)$ is an eigenvalue of the operator $H_\mu(p)$
if and only if $\mu=\mu(p)$ and $\varphi(\mathbf{q}_0(p))=0.$
Moreover, if the threshold $z=M(p)$ is an eigenvalue of the operator
$H_\mu(p)$ then the associated eigenfunction is of the form
\be\label{kkk} f(q)=\frac{C\mu(p) \varphi(q)}{M(p)-w_p(q)}\in
L^2(\T^3) ,\ee where $C\neq 0$ is a normalizing constant.
\end{enumerate}
\end{theorem}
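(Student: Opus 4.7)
The plan is to reduce the eigenvalue problem to a scalar Fredholm--type equation, exploiting the rank--one nature of the perturbation. Suppose $H_\mu(p) f = z f$ with $f \in L^2(\T^3)$ and $z \notin [m(p), M(p)]$. Setting $c := (f, \varphi)_{L^2(\T^3)}$, the equation reads $(w_p(q) - z) f(q) = -\mu c\,\varphi(q)$. If $c = 0$ then $f(q) = 0$ a.e.\ (since $w_p \ne z$ a.e.); otherwise
\[
f(q) = \frac{c\,\mu\,\varphi(q)}{z - w_p(q)},
\]
and pairing this identity with $\varphi$ produces the scalar equation
\[
\Delta(\mu, p, z) := 1 - \mu\,I(p, z) = 0, \qquad I(p, z) := \int_{\T^3} \frac{\varphi^2(s)\,ds}{z - w_p(s)}.
\]
For $z > M(p)$ the function $I(p, \cdot)$ is strictly decreasing, tends to $0$ as $z \to +\infty$, and, by monotone convergence together with Lemma \ref{integral}, tends to $1/\mu(p)$ as $z \to M(p)^+$. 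Hence $\mu\,I(p, z) = 1$ has a unique root $z = E(\mu, p) \in (M(p), +\infty)$ if and only if $\mu > \mu(p)$, establishing the existence/uniqueness part of (i) and the non--existence statement (ii).

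For the analyticity assertions in (i), I invoke the real--analytic implicit function theorem: $\Delta$ is jointly real--analytic on $(0, \infty) \times U_\delta(p_{\,0}) \times (M(p), \infty)$, by analyticity of $\varphi$ and $w_p$ together with uniform positivity of $z - w_p(s)$ on compact subsets and differentiation under the integral, while $\partial_z \Delta > 0$ at the root. Hence $E(\mu, p)$ is jointly real--analytic in $(\mu, p) \in (\mu(p), \infty) \times U_\delta(p_{\,0})$; differentiating $\mu I(p, E) = 1$ in $\mu$ yields $\partial_\mu E = I/(\mu|\partial_z I|) > 0$, i.e.\ strict monotonicity. The eigenfunction formula $\Psi(\mu; p, q, E(\mu, p)) = C\mu\varphi(q)/(E(\mu, p) - w_p(q))$ is precisely the representation derived above; its analyticity in $q \in \T^3$ and jointly in the parameters $(\mu, p)$ follows since $E(\mu, p) - w_p(q) \geq E(\mu, p) - M(p) > 0$ uniformly, so the denominator is analytic and non--vanishing.

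For (iii), applying the same reduction at $z = M(p)$ shows: if $c = 0$ then $(w_p - M(p)) f = 0$ a.e.\ forces $f = 0$, because $\{w_p = M(p)\} = \{\mathbf{q}_0(p)\}$ has measure zero; if $c \ne 0$ then the scalar equation becomes $1 = \mu/\mu(p)$, impossible when $\mu \ne \mu(p)$. Hence $M(p)$ is a regular point iff $\mu \ne \mu(p)$. For (iv) and (v) we have $\mu = \mu(p)$ and the unique (up to scale) candidate solution is $f(q) = C\mu(p)\varphi(q)/(M(p) - w_p(q))$; the remaining issue is purely its integrability. By Hypothesis \ref{Hyp1}(ii), the Morse lemma at the non--degenerate maximum $\mathbf{q}_0(p)$ gives $M(p) - w_p(q) \asymp |q - \mathbf{q}_0(p)|^2$ locally, and the denominator is bounded below off any neighbourhood of $\mathbf{q}_0(p)$. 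In dimension three the volume element $r^2\,dr$ makes $|q - \mathbf{q}_0(p)|^{-2}$ locally $L^1$ but $|q - \mathbf{q}_0(p)|^{-4}$ not. Consequently $f \in L^1(\T^3)$ unconditionally, while $f \in L^2(\T^3)$ if and only if $\varphi(\mathbf{q}_0(p)) = 0$: in that case real--analyticity gives $|\varphi(q)| \lesssim |q - \mathbf{q}_0(p)|$, taming $|f|^2$ to a $|q - \mathbf{q}_0(p)|^{-2}$ singularity, which is locally integrable in three dimensions. This proves (iv) and (v); the only genuinely technical step in the whole argument is this local singularity analysis around $\mathbf{q}_0(p)$, everything else being direct consequences of the explicit scalar equation.
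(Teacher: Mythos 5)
Your proposal is correct and follows essentially the same route as the paper: reduce to the scalar Fredholm determinant $\Delta(\mu,p;z)=1-\mu\int\varphi^2/(z-w_p)$, analyze its value at the threshold via Lemma \ref{integral}, apply the analytic implicit function theorem, and use the Morse lemma at $\mathbf{q}_0(p)$ to decide the $L^1$/$L^2$ dichotomy for parts (iv) and (v). The only cosmetic difference is that you obtain the threshold limit $\Omega(p,z)\to 1/\mu(p)$ by monotone convergence rather than by the Puiseux-type expansion of Lemma \ref{representation}, and your implicit differentiation correctly yields $\partial_\mu E>0$ (the paper's proof text mistakenly says ``decreases'').
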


\begin{remark}\label{positivity}
From the positivity of $\Phi^{\ast}\Phi$ it follows that the
operator $H_{\mu}(p)$ has none eigenvalue lying below $m(p)$.
\end{remark}

\section{Proof of the results}
We postpone the proof of the  theorem after several lemmas and
remarks.
\begin{lemma}\label{non degenerate}
Assume Hypothesis \ref{Hyp1}. Then there exist such
$\delta$-neighborhood  $U_ {\delta } (p_{\,0})\subset \T^3$ of the
point $p =p_{\,0}$ and analytic function $\mathbf{q}_0 :U_ {\delta }
(p_{\,0})\to \T^3$ that for any  $p \in U_ { \delta }(p_{\,0})$ the
point $\mathbf{q}_0(p) $ is a unique non-degenerated maximum of the
function $w_p(\cdot).$
\end{lemma}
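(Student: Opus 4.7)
The plan is to produce $\mathbf{q}_0(p)$ as a local solution of the critical-point equation and then use compactness of $\T^3$ to promote it to a global maximum. By Hypothesis \ref{Hyp1}, $w(\cdot,\cdot)$ is real-analytic on $(\T^3)^2$, and $q_0$ is a non-degenerate maximum of $w_{p_0}(\cdot)$, so $\nabla_q w(p_0,q_0)=0$ and the Hessian $D^2_q w(p_0,q_0)$ is negative definite, in particular invertible. The real-analytic implicit function theorem applied to the system
\begin{equation*}
F(p,q):=\nabla_q w(p,q)=0,\qquad \det D^2_q w(p_0,q_0)\neq 0,
\end{equation*}
therefore yields a radius $\delta_1>0$ and a real-analytic map $\mathbf{q}_0:U_{\delta_1}(p_{\,0})\to\T^3$ with $\mathbf{q}_0(p_{\,0})=q_0$ and $\nabla_q w(p,\mathbf{q}_0(p))\equiv 0$ on $U_{\delta_1}(p_{\,0})$.

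Next I would verify that $\mathbf{q}_0(p)$ remains a non-degenerate maximum locally. Since $p\mapsto D^2_q w(p,\mathbf{q}_0(p))$ is continuous and negative definite at $p=p_{\,0}$, there exists $\delta_2\in(0,\delta_1]$ such that $D^2_q w(p,\mathbf{q}_0(p))$ is negative definite for all $p\in U_{\delta_2}(p_{\,0})$. Hence $\mathbf{q}_0(p)$ is a non-degenerate \emph{local} maximum of $w_p(\cdot)$ on a fixed small ball $B_r(q_0)\subset\T^3$, provided $r$ and $\delta_2$ are chosen small enough so that $\mathbf{q}_0$ is the only critical point of $w_p(\cdot)$ in $B_r(q_0)$ (again by the implicit function theorem, shrinking if necessary).

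The main step, and the one requiring care, is to upgrade this local maximum to the unique \emph{global} maximum on the whole torus. Here I use compactness: since $(p_{\,0},q_0)$ is the unique maximum of $w$ on $(\T^3)^2$ restricted to $\{p=p_{\,0}\}$, the set $\T^3\setminus B_r(q_0)$ is compact and
\begin{equation*}
\eta:=w(p_{\,0},q_0)-\max_{q\in \T^3\setminus B_r(q_0)}w(p_{\,0},q)>0.
\end{equation*}
By uniform continuity of $w$ on the compact set $(\T^3)^2$ and continuity of $p\mapsto w(p,\mathbf{q}_0(p))$, there is $\delta\in(0,\delta_2]$ such that for every $p\in U_{\delta}(p_{\,0})$,
\begin{equation*}
w(p,\mathbf{q}_0(p))-\max_{q\in \T^3\setminus B_r(q_0)}w(p,q)>\eta/2>0.
\end{equation*}
Consequently the global maximum of $w_p(\cdot)$ on $\T^3$ must be attained inside $B_r(q_0)$; but the only critical point of $w_p(\cdot)$ there is $\mathbf{q}_0(p)$, which is therefore the unique non-degenerate global maximum. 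This yields the required $\delta$-neighborhood and analytic function $\mathbf{q}_0$, completing the lemma.
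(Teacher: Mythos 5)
Your proof is correct and follows the same core approach as the paper: apply the real-analytic implicit function theorem to $\nabla_q w(p,q)=0$ at $(p_{\,0},q_0)$, and use continuity to keep the Hessian negative definite. You go further than the paper, though, in that the paper stops after establishing $\mathbf{q}_0(p)$ as an analytic non-degenerate critical point and simply asserts it is the unique global maximum; your compactness argument (comparing $w(p,\mathbf{q}_0(p))$ against $\max_{\T^3\setminus B_r(q_0)}w(p,\cdot)$ via the margin $\eta$ and uniform continuity, then invoking IFT-uniqueness inside $B_r(q_0)$) is exactly the step needed to justify that assertion, so your writeup is more complete than the paper's.
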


\begin{proof}
By Hypothesis \ref{Hyp1}  the square matrix
$$
A(0)=\left(\frac{\partial^{2}w_{p_{\,0}}}{\partial q_{i}\partial
q_{j}}(\mathbf{q}_0)\right)_{i,j=1}^{3}<0
$$
is negatively defined and $\nabla w_{p_{\,0}}(q_0)=0.$ Then by the
implicit function theorem (the analytic case)  there exist a
$\delta$-neighborhood  $U_ {\delta } (p_{\,0})\subset \T^3$ of
$p=p_{\,0}\in \T^3$ and a unique analytic vector function
$\mathbf{q}_0(\cdot):U_{\delta}(p_{\,0})\rightarrow \T^3$ such that
$\nabla w_{p}(q_0(p))=0$ and
$$
A(p)=\left(
 \frac{\partial^{2}w_{p}}{\partial q_{i}\partial
q_{j}}(\mathbf{q}_0(p)) \right)_{i,j=1}^{3}<0,\,\,p\in
U_{\delta}(p_{\,0}).
$$
Hence for any $p\in U_{\delta}(p_{\,0})$ the point
$\mathbf{q}_{0}(p)$ is a unique non degenerated maximum of the
function $w_{p}(\cdot)$.
\end{proof}

For any $\mu>0$ and $p\in \T^3$ we define in $\mathrm{C}\setminus
[m(p); M(p)]$ an analytic function $\Delta(\mu,p; \cdot)$(the
Fredholm determinant $\Delta(\mu,p; \cdot),$ associated to the
operator $H_\mu(p)$) as
\begin{equation}\label{Det.H.lamb}
\Delta(\mu,p\,; \cdot)=1-\mu \Omega(p\,;\cdot),
\end{equation}
where
\begin{equation}\label{eq3.1}
\Omega(p;z)=\int\limits_{\T^3}
\frac{\varphi^2(s)ds}{z-w_p(s)},\qquad p\in \T^3, \quad z\in
\mathrm{C}\backslash [m(p); M(p)].
\end{equation}

\begin{lemma}\label{eigenvalue}
A number $z\in \mathrm{C}\setminus \sigma_{ess}(H_{\mu }(p)),
p\in\T^3$ is an eigenvalue of the operator $H_{\mu}(p)$ if and only
if $\Delta(\mu,p\,;z)=0.$ The associated eigenfunction $f\in
L^2(\T^3)$ is of the form
\begin{equation}\label{f}
f(q)=\frac{C\mu\varphi(q)}{z-w_p(q)},
\end{equation}
where $C\neq0$ is a normalizing constant.
\end{lemma}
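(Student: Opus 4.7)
The plan is to reduce the eigenvalue equation $H_\mu(p) f = z f$ to a single scalar consistency condition by exploiting that $\mu\Phi^*\Phi$ is of rank one. Writing the eigenvalue equation componentwise gives
\begin{equation*}
w_p(q) f(q) + \mu\,\varphi(q)\,(f,\varphi)_{L^2(\T^3)} = z\,f(q),
\end{equation*}
which rearranges to $(z-w_p(q))\,f(q) = \mu\,\varphi(q)\,(f,\varphi)_{L^2(\T^3)}$. Because $z\notin \sigma_{ess}(H_\mu(p))=[m(p),M(p)]$, the function $z-w_p(\cdot)$ is bounded away from zero on $\T^3$, so I can divide through and obtain
\begin{equation*}
f(q) = \frac{C\mu\,\varphi(q)}{z - w_p(q)}, \qquad C := (f,\varphi)_{L^2(\T^3)}.
\end{equation*}
This already yields the claimed form \eqref{f} and, together with the analyticity (hence boundedness) of $\varphi$ from Hypothesis \ref{Hyp1}, shows that any such $f$ automatically lies in $L^2(\T^3)$.

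To close the loop, I would take the inner product of the displayed identity with $\varphi$. Since $\varphi$ is real-valued and $\Omega(p;z)$ is defined by \eqref{eq3.1}, this produces the scalar condition
\begin{equation*}
C = C\mu\,\Omega(p;z), \qquad \text{equivalently} \qquad C\,\Delta(\mu,p;z)=0.
\end{equation*}
For $f$ to be a genuine eigenfunction we need $f\not\equiv 0$; since $\varphi$ is nontrivial and $z-w_p$ does not vanish, the formula for $f$ forces $C\neq 0$. Therefore $\Delta(\mu,p;z)=0$, which proves the necessity direction together with the normalization $C\neq 0$.

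For sufficiency, I would assume $\Delta(\mu,p;z)=0$ and choose any $C\neq 0$, defining $f$ by the formula above. The function $f$ lies in $L^2(\T^3)$ as before, and a direct computation using $\Delta(\mu,p;z)=0$ gives $(f,\varphi)_{L^2(\T^3)} = C\mu\,\Omega(p;z) = C$, so that $(z-w_p(q))f(q) = \mu\,\varphi(q)\,(f,\varphi)_{L^2(\T^3)}$, i.e., $H_\mu(p)f = zf$. There is no real obstacle in this argument: the hypothesis $z\notin\sigma_{ess}(H_\mu(p))$ takes care of all invertibility and integrability issues, and the rank-one structure collapses the problem to the single equation $\Delta(\mu,p;z)=0$. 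The only point that deserves a brief remark is the uniqueness of $f$ up to a scalar, which follows because the value of $C=(f,\varphi)$ completely determines $f$ through the displayed formula.
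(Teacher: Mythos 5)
Your argument is correct and follows essentially the same route as the paper: rearrange the eigenvalue equation, divide by $z-w_p(q)$ (legitimate since $z\notin[m(p),M(p)]$), pair with $\varphi$ to collapse the rank-one problem to the scalar condition $\Delta(\mu,p;z)=0$, and reverse the steps for sufficiency. You are a bit more explicit than the paper in spelling out why $C=(f,\varphi)\neq 0$ for a genuine eigenfunction and in exhibiting the consistency equation $C=C\mu\,\Omega(p;z)$, but the structure and the use of Hypothesis~\ref{Hyp1} are identical.
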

\begin{proof} If
a number $z\in \bbC\setminus \sigma_{ess}(H_{\mu }(p)), p\in\T^3$ is
an eigenvalue of the operator $H_{\mu}(p)$ and $f\in L^2(\T^3)$ is
an associated eigenfunction, i.e., the equation
\begin{equation}\label{nonzeroz}
[\omega_p(q)-z]f(q)-\mu \varphi(q) \int\limits_{{\T}^3}
\varphi(t)f(t)dt=0,
\end{equation}
with
\begin{equation*}
\int\limits_{{\T}^3} \varphi(t)f(t)dt\neq 0.
\end{equation*}
has solution, then the solution $f$ of equation \eqref{nonzeroz} is
given by
\begin{equation}\label{solutionz}
f(q)=\frac{C\mu\varphi(q)}{z-w_p(q)},
\end{equation}
where $C\ne0$ is a normalizing constant. The representation
\eqref{solutionz} of the solution of equation \eqref{nonzeroz}
implies that $\Delta(\mu,p\,;z)=0$.

Conversely. Let $\Delta(\mu,p\,;z)=0$ for some $z\in \bbC\setminus
\sigma_{ess}(H_{\mu }(p)), p\in\T^3$. Then the function $f$, defined
by \eqref{solutionz}, belong to $L^2(\T^3)$ and obeys the equation
$H_{\mu}(p)f=zf.$

The analyticity of the eigenfunction $f(\cdot)$ defined by
\eqref{solutionz} follows from the analyticity of $\varphi(\cdot)$
and $w_p(\cdot)$ as well as due to the fact that the denominator
$z-w_p(\cdot)$ in \eqref{solutionz} is not vanished.

\end{proof}

\begin{proposition}\label{proposition}
For $\zeta<0$  the following equalities hold:
$$I_{n}(\zeta) =\int\limits_{0}^\delta
\frac{r^{2n}dr}{r^2-\zeta}=
\dfrac{\pi}{2}\cdot\dfrac{\zeta^n}{\sqrt{-\zeta}}+\tilde{I}_n(\zeta)\,
,\,n=0,1,2,\cdots,
$$ where $\tilde{I}_n(\zeta)$ is an
analytic function in a neighborhood of the origin {\cite{L92}}.
\end{proposition}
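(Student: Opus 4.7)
The plan is to reduce everything to the case $n=0$ by polynomial division, and then to evaluate $I_0(\zeta)$ in closed form using the elementary antiderivative of $1/(r^2-\zeta)$.

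First I would treat $n=0$. Setting $a=\sqrt{-\zeta}>0$, a direct computation gives
$$
I_0(\zeta)=\int_0^\delta\frac{dr}{r^2+a^2}=\frac{1}{a}\arctan\!\frac{\delta}{a}.
$$
Using the identity $\arctan(x)+\arctan(1/x)=\pi/2$ for $x>0$, I would rewrite this as
$$
I_0(\zeta)=\frac{\pi}{2a}-\frac{1}{a}\arctan\!\frac{a}{\delta}
=\frac{\pi}{2\sqrt{-\zeta}}-\frac{\arctan(\sqrt{-\zeta}/\delta)}{\sqrt{-\zeta}}.
$$
The point is that the second term, although written with $\sqrt{-\zeta}$, is an \emph{even} analytic function of the variable $u=\sqrt{-\zeta}$: the Taylor series of $\arctan(u/\delta)/u$ around $u=0$ contains only even powers of $u$, so it is a power series in $u^2=-\zeta$. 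Hence $\tilde I_0(\zeta):=-\arctan(\sqrt{-\zeta}/\delta)/\sqrt{-\zeta}$ is analytic in a neighborhood of $\zeta=0$, and the desired formula for $n=0$ follows.

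Next I would pass to arbitrary $n$ by Euclidean division of $r^{2n}$ by $r^2-\zeta$ (viewed as polynomials in $r^2$). The remainder at $r^2=\zeta$ is $\zeta^n$, so
$$
\frac{r^{2n}}{r^2-\zeta}=Q_n(r^2,\zeta)+\frac{\zeta^n}{r^2-\zeta},\qquad
Q_n(r^2,\zeta)=\sum_{k=0}^{n-1}\zeta^{n-1-k}r^{2k}.
$$
Integrating from $0$ to $\delta$ gives
$$
I_n(\zeta)=\zeta^n I_0(\zeta)+P_n(\zeta),
$$
where $P_n(\zeta)=\int_0^\delta Q_n(r^2,\zeta)\,dr$ is a polynomial in $\zeta$. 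Substituting the expression for $I_0(\zeta)$ obtained above yields
$$
I_n(\zeta)=\frac{\pi}{2}\cdot\frac{\zeta^n}{\sqrt{-\zeta}}+\bigl(\zeta^n\tilde I_0(\zeta)+P_n(\zeta)\bigr),
$$
and the function in parentheses is analytic near $\zeta=0$, proving the formula with $\tilde I_n(\zeta):=\zeta^n\tilde I_0(\zeta)+P_n(\zeta)$.

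The only delicate step is the analyticity of $\tilde I_0(\zeta)$, and this is where I would be most careful: one must observe that only even powers of $\sqrt{-\zeta}$ appear in the Taylor expansion of $\arctan(\sqrt{-\zeta}/\delta)/\sqrt{-\zeta}$, so that the apparent branch point at $\zeta=0$ is in fact removable. The rest of the argument is an algebraic reduction and routine.
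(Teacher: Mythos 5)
Your proof is correct and self-contained. The paper itself does not prove Proposition~\ref{proposition}; it only cites the reference \cite{L92}, so there is no internal argument to compare against. Your reduction via Euclidean division, $r^{2n}/(r^2-\zeta)=Q_n(r^2,\zeta)+\zeta^n/(r^2-\zeta)$ with $Q_n$ polynomial, cleanly isolates the $n=0$ case, and the explicit antiderivative $I_0(\zeta)=\tfrac{1}{\sqrt{-\zeta}}\arctan(\delta/\sqrt{-\zeta})$ together with $\arctan x+\arctan(1/x)=\pi/2$ gives exactly the required split. The key observation that
$$
-\frac{\arctan(\sqrt{-\zeta}/\delta)}{\sqrt{-\zeta}}=-\sum_{k\ge0}\frac{\zeta^k}{(2k+1)\,\delta^{2k+1}}
$$
contains only integer powers of $\zeta$ (radius of convergence $\delta^2$) is precisely the point that removes the apparent branch singularity, and you state it explicitly. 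The formula $\tilde I_n(\zeta)=\zeta^n\tilde I_0(\zeta)+P_n(\zeta)$ then manifestly defines an analytic function near the origin. No gaps.
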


\begin{lemma}\label{integral}
Assume Hypothesis \ref{Hyp1}.Then for any $p\in U_{\delta}(p_{\,0})$
the integral
$$\Omega(p)=\Omega(p, M(p))=\int\limits_{\T^3} \dfrac{\varphi^2(s)ds}{M(p)-w_p(s)
}\,$$ exists and defines an  analytic function in
$U_{\delta}(p_{\,0}).$
\end{lemma}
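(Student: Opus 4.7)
The plan is to localize the integral near the unique non-degenerate maximum $\mathbf{q}_0(p)$ of $w_p$ (guaranteed by Lemma \ref{non degenerate}), which is the only point where the denominator vanishes, and to exploit the three-dimensional spherical Jacobian $r^2\,dr\,d\omega$ to cancel the apparent $r^{-2}$ singularity produced by the non-degeneracy. Concretely, I would pick $\rho>0$ and (shrinking $\delta$ if needed) so that the ball $V_\rho(\mathbf{q}_0(p))\subset \T^3$ is well-defined for every $p\in U_\delta(p_0)$ and $M(p)-w_p(s)\ge c_0>0$ holds on $\T^3\setminus V_\rho(\mathbf{q}_0(p))$ uniformly in $p$, then split $\Omega(p)=I_{\mathrm{out}}(p)+I_{\mathrm{in}}(p)$ accordingly. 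The outside piece has integrand jointly real-analytic in $(p,s)$ and uniformly bounded, and the $p$-dependence of the domain is analytic through $\mathbf{q}_0(\cdot)$, so standard theorems on analytic parameter integrals yield that $I_{\mathrm{out}}(p)$ is real-analytic on $U_\delta(p_0)$.

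For $I_{\mathrm{in}}(p)$ I would shift $s=\mathbf{q}_0(p)+u$, reducing the domain to the fixed ball $\{|u|<\rho\}$, and use the integral form of Taylor's theorem at the critical point $u=0$ to write
$$M(p)-w_p(\mathbf{q}_0(p)+u)=\langle B(p,u)u,u\rangle,\qquad B(p,u)=-\int_0^1(1-t)\,\mathrm{Hess}\,w_p(\mathbf{q}_0(p)+tu)\,dt,$$
which is jointly real-analytic in $(p,u)$ with $B(p,0)=-\tfrac12 A(p)$ positive definite by Lemma \ref{non degenerate}. Shrinking $\rho$ and $\delta$ further, $\langle B(p,u)\omega,\omega\rangle\ge c_1>0$ uniformly for $\omega\in S^2$, $|u|\le\rho$, $p\in U_\delta(p_0)$. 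Passing to spherical coordinates $u=r\omega$, the two factors of $r^2$ cancel and
$$I_{\mathrm{in}}(p)=\int_0^\rho\!\int_{S^2}\frac{\varphi^2(\mathbf{q}_0(p)+r\omega)}{\langle B(p,r\omega)\omega,\omega\rangle}\,d\omega\,dr,$$
where the integrand is continuous on the compact set $[0,\rho]\times S^2$ and real-analytic in $p$ with uniform bounds. Fubini combined with Morera's theorem, applied on a complex neighborhood of $U_\delta(p_0)$, then gives real-analyticity of $I_{\mathrm{in}}(p)$, and existence of the integral $\Omega(p)$ is automatic from the uniform boundedness of the integrand on this compact set.

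The main technical obstacle is the Taylor factorization $M(p)-w_p(\mathbf{q}_0(p)+u)=\langle B(p,u)u,u\rangle$ with $B$ jointly analytic and uniformly positive definite; this is the one step where the non-degeneracy of the maximum genuinely enters, and it is exactly what converts an apparent $r^{-2}$ singularity into a smooth, bounded integrand after passage to polar coordinates. The strict positivity $\Omega(p)>0$ then follows at once from positivity of the integrand together with the fact that the nontrivial real-analytic function $\varphi$ cannot vanish on any set of positive measure in $\T^3$.
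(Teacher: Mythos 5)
Your argument is correct, and it reaches the conclusion by a genuinely different route than the paper. Both proofs split the integral into a piece away from the maximum (trivially analytic by uniform positivity of the denominator) and a piece near $\mathbf{q}_0(p)$, and both use the factor $r^2$ from the three-dimensional spherical Jacobian to cancel the $r^{-2}$ produced by the non-degenerate quadratic vanishing of $M(p)-w_p$. The difference is in how the denominator is controlled near the maximum. The paper invokes the \emph{parametrical Morse lemma} to produce an exact analytic change of variables $s=\psi(y,p)$ taking $w_p$ to the normal form $M(p)-|y|^2$, then expands the angular integral in a Pizetti series $\sum\tau_n(p)r^{2n}$ and sums the resulting one-dimensional integrals. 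You instead use the integral form of Taylor's theorem at the critical point to write
$M(p)-w_p(\mathbf{q}_0(p)+u)=\langle B(p,u)u,u\rangle$
with $B$ jointly analytic and $B(p,0)=-\tfrac12 A(p)\succ0$, and then deduce uniform boundedness of $\varphi^2(\mathbf{q}_0(p)+r\omega)/\langle B(p,r\omega)\omega,\omega\rangle$ on $[0,\rho]\times S^2$ directly, with analyticity via Fubini and Morera.

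For the present lemma your approach is leaner: the Taylor quadratic factorization is elementary, avoids the analytic Morse lemma, and gives existence and analyticity in one stroke. What the paper's heavier machinery buys is the explicit Pizetti expansion \eqref{x} of $\Omega_1(p,z)$, which is reused verbatim in the proof of Lemma~\ref{representation} to produce the half-integer power series of $\Delta(\mu,p;z)$ around $z=M(p)$ (in particular the leading $(z-M(p))^{1/2}$ term with coefficient $\pi\tau_0(p)/2$). Your factorization $\langle B(p,u)u,u\rangle$ does not come with such a clean separation of variables in polar coordinates and would not by itself yield that expansion without essentially redoing the Morse normalization. So your proof stands as a valid alternative for Lemma~\ref{integral} but is less economical in the context of the paper as a whole. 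One small point worth making explicit: after the shift $s=\mathbf{q}_0(p)+u$, your outer integral $I_{\mathrm{out}}(p)$ becomes an integral over the fixed set $\T^3\setminus\{|u|<\rho\}$, which cleanly resolves the $p$-dependence of the domain that you flag only in passing.
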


\begin{proof}
We represent the function
$$\Omega(p,z)=\int\limits_{\T^3} \dfrac{\varphi^2(s)ds}{z-w_p(s)}\,$$ in the form
\begin{gather} \label{Omega}\Omega(p,z)=
\int\limits_{U(\mathbf{q}_0(p))}\dfrac{\varphi^2(s)ds}{z-w_p(s)}+\int\limits_{\T^3\setminus
U(q_0(p))}
\dfrac{\varphi^2(s)ds}{z-w_p(s)}=\\
=\Omega_1(p,z)+\Omega_2(p,z),\nonumber
\end{gather}
where
 $U(\mathbf{q}_0(p))$ is a neighborhood of $\mathbf{q}_0(p).$

Observe that by Hypothesis \ref{Hyp1} for any $p\in
U_{\delta}(p_{\,0})$ the function $\Omega_2(p,z)$ is analytic at
$z=M(p).$

We note that by the parametrical Morse lemma for any $p\in
U_{\delta}(p_{\,0})$ there exists a map $s=\psi(y,p)$ of the sphere
$W_{\gamma}(0)\subset \mathbb{R}^3$ with radius $ \gamma>0$ and
center at $y=0$ to a neighborhood $U(\mathbf{q}_0(p))$ of the point
$\mathbf{q}_0(p)$ that in $U(\mathbf{q}_0(p))$ the function
$w_p(\psi(y,p))$ can be represented as
$$w_p(\psi(y,p))=M(p)-y_1^2-y_2^2-y_3^2=M(p)-y^2.$$ Here the
function $\psi(y,\cdot)$ (resp. $\psi(\cdot,p)$) is analytic in
$U_{\delta}(p_{\,0})$ (resp. $W_{\gamma}(0)$) and
$\psi(0,p)=\mathbf{q}_0(p).$ Moreover, the Jacobian $J(\psi(y,p))$
of the mapping  $s=\psi(y,p) $ is analytic in $W_{\gamma}(0)$ and
positive, i.e., $J(\psi(y,p))>0$ for all $y\in W_{\gamma}(0)$ and
$p\in U_{\delta}(p_{\,0})$.

In the integral for $\Omega_1(p,z)$ changing of variables
$s=\psi(y,p) $ gives

\begin{equation}\Omega_1(p,z) = \int\limits_{W_{\gamma}(0)}
\dfrac{\varphi^2(\psi(y,p))}{y^2+z-M(p)}J(\psi(y,p))dy,
\end{equation}
where $J(\psi(y,p))$ is  the Jacobian of the mapping
$\psi(y,p).$

Passing to spherical coordinates as $y=r\nu$, we obtain \be
\Omega_1(p,z) = \int\limits_{0}^{\gamma}
\dfrac{r^{2}}{r^2+z-M(p)}\left\{\int\limits_{\Omega_3}
\varphi^2(\psi(r\nu,p))J(\psi(r\nu,p))\,d\nu\right\} dr,\ee where
$\Omega_3$ is a unit sphere in  $\R^3$ and $d\nu$ -- its element.
Inner integral can be represented as
\be\label{iii}\int\limits_{\Omega_3}
\varphi^2(\psi(r\nu,p))J(\psi(r\nu,p))\,d\nu
=\sum\limits_{n=0}^{\infty} \tau_n(p) r^{2n},\ee where $\tau_n(p),$
$n=0,1,2,\ldots$ are Pizetti coefficients.

Thus we have that
\be\label{x}\Omega_1(p,z)=\sum\limits_{n=0}^{\infty}
\tau_n(p)\int\limits_{0}^{\gamma} \dfrac{r^{2n+2}dr
}{r^2+z-M(p)},\quad
\tau_0(p)=\varphi^2(\mathbf{q}_0(p))J(\mathbf{q}_0(p))\ee Since the
function under the integral sign in \eqref{iii} is analytic in
$U_{\delta}(p_{\,0}),$ the Pizetti coefficients $\tau_n(p),$
$n=0,1,2,\ldots$ are analytic in $U_{\delta}(p_{\,0}).$ The
representation \eqref{x} yields that the following limit exists
$$\Omega_1(p)=\lim\limits_{z\to M(p)+0}\Omega_1(p,z)=\lim\limits_{z\to M(p)+0}
\sum\limits_{n=0}^{\infty} \tau_n(p)\int\limits_{0}^{\gamma}
\dfrac{r^{2n+2}dr}{r^2+z-M(p)}=\sum\limits_{n=0}^{\infty}
\frac{\gamma^{2n+1}}{2n+1}\,\tau_n(p)$$ and consequently,
\be\label{Omega_1}\Omega(p)=\lim\limits_{z\to
M(p)+0}\Omega(p,z)=\Omega_1(p)+\Omega_2(p),\ee where
$\Omega_2(p)=\Omega_2(p,M(p)).$ The analyticity of the Pizetti
coefficients $\tau_n(p),$ $n=0,1,2,\ldots$  in $U_{\delta}(p_{\,0})$
yield that the function $\Omega_1(p)$ is analytic in $p\in
U_{\delta}(p_{\,0}).$ So, $\Omega(p)$ is analytic in $p\in
U_{\delta}(p_{\,0}).$
\end{proof}

\begin{lemma}\label{lresonance}
Assume Hypothesis \ref{Hyp1} and $p\in U_{\delta}(p_{\,0}).$ Then
the following statements are equivalent:
\item[{\rm (i)}] the threshold $M(p)$ is a resonance of the operator $H_\mu(p)$ and the
associated resonance state  is of the form
\begin{equation}\label{ressolution}
f(q)=\frac{C\mu(p)\varphi(q)}{M(p)-w_p(q)},
\end{equation}
where $C\ne0$ is a normalizing constant.
\\
\item[{\rm (ii)}] $\varphi(\mathbf{q}_0(p))\neq 0$ and $\Delta(\mu,p\,;M(p))=0.$ \\
\item[{\rm (iii)}] $\varphi(\mathbf{q}_0(p))\neq 0$ and $\mu= \mu(p).$
\end{lemma}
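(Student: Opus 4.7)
The plan is to establish the chain (ii) $\Leftrightarrow$ (iii) $\Rightarrow$ (i) $\Rightarrow$ (iii). The equivalence of (ii) and (iii) is essentially by definition: since $\Omega(p) = \Omega(p, M(p)) = 1/\mu(p)$ by Lemma \ref{integral} and the definition of the coupling constant threshold,
$$\Delta(\mu, p; M(p)) = 1 - \mu\,\Omega(p) = 1 - \mu/\mu(p),$$
which vanishes if and only if $\mu = \mu(p)$; the hypothesis $\varphi(\mathbf{q}_0(p)) \neq 0$ appears identically in (ii) and (iii), so there is nothing further to check.

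For (iii) $\Rightarrow$ (i), I would simply define $f$ by formula \eqref{ressolution} and verify its two required properties. Direct substitution into $H_\mu(p) f - M(p) f$, together with the self-consistency computation $(f,\varphi)_{L^2(\T^3)} = C\mu(p)\Omega(p) = C$ (which uses $\mu = \mu(p)$ and $\mu(p)\Omega(p) = 1$), shows that $f$ satisfies $H_\mu(p) f = M(p) f$ pointwise almost everywhere. The integrability analysis then invokes the parametric Morse chart $s = \psi(y, p)$ from Lemma \ref{integral}: since $M(p) - w_p(\psi(y, p)) = y^2$ and the Jacobian is positive and analytic, a passage to spherical coordinates $y = r\nu$ reduces matters to the elementary fact that $r^{-2}$ is locally integrable in $\R^3$ but its square is not. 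The non-vanishing of $\varphi$ at $\mathbf{q}_0(p)$ is what forces the $L^2$ integral to diverge; if instead $\varphi(\mathbf{q}_0(p)) = 0$, an additional factor $|y|^2$ (or higher) would render $|f|^2$ integrable and place $f$ in $L^2(\T^3)$.

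For the converse (i) $\Rightarrow$ (iii), I would argue as in Lemma \ref{eigenvalue}, but interpret the eigenvalue equation pointwise on $L^1(\T^3)$ (which is legitimate since $w_p$ and $\varphi$ are bounded). The equation reads $(M(p) - w_p(q))\, f(q) = \mu\,\varphi(q)\,(f, \varphi)$. Since $w_p$ attains its value $M(p)$ only on the null set $\{\mathbf{q}_0(p)\}$, either $(f, \varphi) = 0$ (which forces $f \equiv 0$ a.e., contradicting non-triviality) or $f$ is necessarily of the form \eqref{ressolution}. Integrating \eqref{ressolution} against $\varphi$ and matching with $(f,\varphi) \neq 0$ yields $\mu\,\Omega(p) = 1$, i.e., $\mu = \mu(p)$. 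The requirement $f \notin L^2(\T^3)$ then forces $\varphi(\mathbf{q}_0(p)) \neq 0$ by running the Morse analysis above in reverse.

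The main subtlety is the $L^1$ versus $L^2$ dichotomy at the threshold, but all of the machinery needed (the analytic-parameter Morse lemma, the Pizetti expansion \eqref{iii}, and control of the Jacobian $J(\psi(y,p))$) has already been set up in Lemma \ref{integral}, so the actual computation reduces to standard power-counting in three dimensions. No new analytic input beyond Lemma \ref{integral} and the construction of the explicit solution is required.
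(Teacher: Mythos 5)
Your proof is correct and follows essentially the same route as the paper: both rely on the explicit formula \eqref{ressolution} for the resonance state, the self-consistency identity $\mu\,\Omega(p)=1$ to force $\mu=\mu(p)$, and the parametric Morse lemma with spherical power-counting near $\mathbf{q}_0(p)$ to separate $L^1(\T^3)\setminus L^2(\T^3)$ from $L^2(\T^3)$ according to whether $\varphi(\mathbf{q}_0(p))$ vanishes. The only difference is organizational: the paper proves (i)~$\Rightarrow$~(ii),(iii) and then (iii)~$\Rightarrow$~(i), while you close the cycle (ii)~$\Leftrightarrow$~(iii)~$\Rightarrow$~(i)~$\Rightarrow$~(iii).
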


\begin{proof}
Let the threshold $M(p)$ be a resonance of the operator $H_\mu(p)$.
According to the definition of resonance the equation
$H_{\mu}(p)f=M(p)f$ has a nontrivial solution $f\in
L^1(\T^3)\setminus L^2(\T^3)$, i.e., the equation
\begin{equation}\label{eqresonance}
[M(p)-w_p(q)]f(q)-\mu \varphi(q) \int\limits_{{\T}^3}
\varphi(t)f(t)dt=0,
\end{equation}
with
\begin{equation*}
\int\limits_{{\T}^3} \varphi(t)f(t)dt\neq 0.
\end{equation*}
has a nontrivial solution. It is easy to check that the solution $f$
of  equation \eqref{eqresonance}, i.e., the resonance state, is
given by \eqref{ressolution}. Since $w_p(\cdot)$ has a unique
non-degenerate maximum at $\mathbf{q}_0(p)\in \T^3$, in the integral
$$\Omega_1(p)=\int\limits_{W_\gamma(0)}\frac{
 \varphi^2(\psi(y,p))J(\psi(y,p))dy }{y^4}.$$
passing to spherical coordinates as $y=r\nu$ we get
\be\label{asymptotics1} \Omega_1(p)=\int\limits_0^\gamma
\left(\int\limits_{\Omega_3}
 \varphi^2(\psi(r\nu,p))J(\psi(r\nu,p)) \,d\nu\,\right)r^{-2}
 dr.\ee
Expanding the function $\varphi(\psi(r\nu,p))$ to the Taylor series
at $r=0$ we obtain \be\label{asymptotics2}
\varphi(\psi(r\nu,p))=\varphi(\mathbf{q}_0(p))+\sum\limits_{i=1}^3
\frac{\partial\varphi}{\partial
\psi^{(i)}}(\mathbf{q}_0(p))\left(\sum\limits_{j=1}^3\frac{\partial\psi^{(i)}}{\partial
y_j}(0,p)\,\nu_j\right)r+g(r,\nu)r^2,\,y_j=r\nu_j, \ee where
$g(\cdot,\nu)$ is continuous in $W_{\gamma}(0)$ and
$\nu^2_1+\nu^2_2+\nu^2_3=1.$  Since the solution $f$ of the equation
\eqref{eqresonance} belongs to $L^1(\T^3)\setminus L^2(\T^3)$ the
asymptotics \eqref{asymptotics2} yields the relation
$\varphi(\mathbf{q}_0(p))\ne0$.

Putting the expression \eqref{ressolution} for $f$ to the equation
\eqref{eqresonance} yields
\begin{equation}\label{resonance}
\varphi(q)-\mu \varphi(q) \int\limits_{{\T}^3}
\frac{\varphi^2(t)dt}{M(p)-w_p(t)}=0,
\end{equation}
which implies the equalities $\Delta_{\mu}(p,M(p))=0$, and
$\mu=\mu(p)$.

Let $\varphi(\mathbf{q}_0(p))\neq 0$ and $\mu= \mu(p)$. Then it easy
to check that $\Delta(\mu,p\,;M(p))=0$ and the function $f$, defined
by \eqref{ressolution}, belongs to $L^1(\T^3)\setminus L^2(\T^3)$
and obeys the equation $H_{\mu}(p)f=M(p)f.$

\end{proof}

\begin{lemma}\label{leigenvalue}
Assume Hypothesis \ref{Hyp1} and $p\in U_{\delta}(p_{\,0}).$ Then
the following statements are equivalent:
\item[{\rm (i)}] The threshold $z=M(p)$ is an eigenvalue of the operator $H_\mu(p)$ and the
associated eigenvector  is of the form
\begin{equation}\label{eigsolution}
f(q)=\frac{C\mu(p)\varphi(q)}{M(p)-w_p(q)},
\end{equation}
where $C\ne0$ is a normalizing constant.
\item[{\rm (ii)}] $\varphi(\mathbf{q}_0(p))= 0$  and $\Delta(\mu,p\,;M(p))=0.$ \\
\item[{\rm (iii)}] $\varphi(\mathbf{q}_0(p))= 0$ and $\mu= \mu(p).$
\end{lemma}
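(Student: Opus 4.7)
The plan is to adapt the proof of Lemma \ref{lresonance} almost verbatim, since statements (i)--(iii) differ from those of that lemma only in the replacement of $\varphi(\mathbf{q}_0(p))\neq 0$ by $\varphi(\mathbf{q}_0(p))=0$. This single change toggles the formal solution $f(q)=C\mu(p)\varphi(q)/(M(p)-w_p(q))$ between the class $L^1\setminus L^2$ (resonance case) and the class $L^2$ (eigenvalue case). The equivalence (ii) $\Leftrightarrow$ (iii) is immediate: by Lemma \ref{integral} the integral $\Omega(p)=\Omega(p,M(p))$ exists, so $\Delta(\mu,p;M(p))=1-\mu\Omega(p)=0$ holds if and only if $\mu=1/\Omega(p)=\mu(p)$.

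For (i) $\Rightarrow$ (iii), I start from the eigenvalue equation $[M(p)-w_p(q)]f(q)=\mu\varphi(q)\int_{\T^3}\varphi(t)f(t)\,dt$. The integral on the right cannot vanish (otherwise $f=0$ almost everywhere, since $M(p)-w_p$ vanishes only at the isolated point $\mathbf{q}_0(p)$), and division yields the representation \eqref{eigsolution}. Substituting this back produces $\Delta(\mu,p;M(p))=0$, equivalently $\mu=\mu(p)$. To extract $\varphi(\mathbf{q}_0(p))=0$, I apply the parametric Morse change of variables $s=\psi(y,p)$ from Lemma \ref{integral} in a neighborhood of $\mathbf{q}_0(p)$ and pass to spherical coordinates $y=r\nu$, which transforms the local $L^2$-norm of $f$ into
\begin{equation*}
C^2\mu(p)^2\int_0^\gamma r^{-2}\Bigl(\int_{\Omega_3}\varphi^2(\psi(r\nu,p))J(\psi(r\nu,p))\,d\nu\Bigr)dr.
\end{equation*}
If $\varphi(\mathbf{q}_0(p))\neq 0$, the inner integral converges to the positive limit $4\pi\varphi^2(\mathbf{q}_0(p))J(\mathbf{q}_0(p))$ as $r\to 0$, forcing the outer $r$-integral to diverge and contradicting $f\in L^2(\T^3)$; hence $\varphi(\mathbf{q}_0(p))=0$.

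For the converse (iii) $\Rightarrow$ (i), I define $f$ by \eqref{eigsolution}. The identity $\mu(p)\Omega(p)=1$ together with a direct substitution shows that $f$ obeys $H_{\mu(p)}(p)f=M(p)f$. Since $\varphi(\mathbf{q}_0(p))=0$, the Taylor expansion \eqref{asymptotics2} gives $\varphi(\psi(r\nu,p))=O(r)$, so the inner spherical integral above is $O(r^2)$ and the outer $r$-integral is finite; hence $f\in L^2(\T^3)$, confirming that $M(p)$ is an eigenvalue with eigenvector of the claimed form. The main technical point is this $L^2$-integrability analysis at the non-degenerate maximum: the first-order vanishing of $\varphi$ at $\mathbf{q}_0(p)$, combined with the Morse normal form, precisely cancels the $r^{-4}$ singularity of $|f|^2$ against the $r^2$ Jacobian in three dimensions, converting the borderline divergence of the resonance case into convergence.
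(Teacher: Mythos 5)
Your proof is correct and follows essentially the same route as the paper's: derive the explicit form of $f$, substitute back to deduce $\Delta(\mu,p;M(p))=0$ and hence $\mu=\mu(p)$, and then use the parametric Morse change of variables together with the Taylor expansion \eqref{asymptotics2} to translate $f\in L^2(\T^3)$ into the vanishing of $\varphi$ at $\mathbf{q}_0(p)$. If anything, you are more careful than the paper in two places: you note explicitly that $\int_{\T^3}\varphi(t)f(t)\,dt\ne 0$ (because $M(p)-w_p$ vanishes only at the isolated point $\mathbf{q}_0(p)$), and in the converse direction (iii)$\Rightarrow$(i) you actually verify that the candidate $f$ defined by \eqref{eigsolution} lies in $L^2(\T^3)$ via the $O(r^2)$ bound on the inner spherical integral, a step the paper's proof leaves implicit. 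Your limit value $4\pi\varphi^2(\mathbf{q}_0(p))J(\mathbf{q}_0(p))$ for the inner integral is the correct one (the surface area of $\Omega_3$ is $4\pi$); the paper's display \eqref{x} omits this factor in its formula for $\tau_0(p)$, which is a typo, not a substantive difference.
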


\begin{proof}
Let $z=M(p)$ be an eigenvalue of the operator $H_{\mu}(p)$ and $f\in
L^2(\T^3)$ is an associated eigenfunction, i.e., the equation
\begin{equation}\label{eigenequation}
[M(p)-w_p(q)]f(q)-\mu \varphi(q) \int\limits_{{\T}^3}
\varphi(t)f(t)dt=0,
\end{equation}
with
\begin{equation*}
\int\limits_{{\T}^3} \varphi(t)f(t)dt\neq 0,
\end{equation*}
has nontrivial solution. Then the associated eigenfunction $f$ is
given by \eqref{eigsolution}. In this case the relation $ f\in
L^2(\T^3)$ and asymptotics \eqref{asymptotics2} yield the equality
$\varphi(\mathbf{q}_0(p))= 0$.  The equation \ref{eigenvalue}
implies the equality $\Delta(\mu,p\,;M(p))=0$, which yields that
$\mu=\mu(p)$

 Let $\varphi(\mathbf{q}_0(p))= 0$ and $\mu=\mu(p)$. Then
$\Delta(\mu,p\,;M(p))=0$ and the function $f$, defined by
\eqref{eigsolution}, obeys the equation $H_{\mu}(p)f=M(p)f.$
\end{proof}
\begin{corollary}\label{trivial}
The equation $H_{\mu}(p)f=M(p)f$ has only trivial solution $f=0\in
L^2(\T^3)$ if and only if $\mu\ne \mu(p)$.
\end{corollary}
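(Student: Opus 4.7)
The plan is to deduce Corollary \ref{trivial} directly from Lemmas \ref{lresonance} and \ref{leigenvalue}, whose case-split is governed by $\varphi(\mathbf{q}_0(p))$. The direction $\mu\ne\mu(p)\Rightarrow f=0$ is a short self-contained computation from the rank-one structure, while the converse uses the two lemmas together.

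For $\mu \ne \mu(p) \Rightarrow$ only the trivial solution: take any $f \in L^2(\T^3)$ with $H_\mu(p) f = M(p) f$. Using the rank-one form of $\Phi^*\Phi$ the equation becomes
\[
[M(p) - w_p(q)]\, f(q) = \mu\, \varphi(q)\, C, \qquad C := (f,\varphi)_{L^2(\T^3)},
\]
and I split on whether $C$ vanishes. If $C = 0$ then $[M(p)-w_p(q)]\, f(q) = 0$ a.e.; since Lemma \ref{non degenerate} shows that $w_p$ attains its maximum only at the isolated point $\mathbf{q}_0(p)$ (a Lebesgue-null set), this forces $f = 0$. If $C \ne 0$ then $f$ must be of the explicit form \eqref{f}; substituting back into the definition of $C$ and invoking $\mu(p)\,\Omega(p) = 1$ yields $1 = \mu\, \Omega(p) = \mu/\mu(p)$, i.e.\ $\mu = \mu(p)$, contradicting the hypothesis.

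For $\mu = \mu(p) \Rightarrow$ a nontrivial solution exists: I split on $\varphi(\mathbf{q}_0(p))$. If $\varphi(\mathbf{q}_0(p)) = 0$, Lemma \ref{leigenvalue} directly produces the nontrivial $L^2(\T^3)$ eigenfunction \eqref{eigsolution}. If $\varphi(\mathbf{q}_0(p)) \ne 0$, Lemma \ref{lresonance} delivers the nontrivial resonance state \eqref{ressolution}, which lies in $L^1(\T^3) \setminus L^2(\T^3)$ yet still solves $H_\mu(p) f = M(p) f$. The main subtlety is precisely this latter case: the resonance state is not literally in $L^2(\T^3)$, so the corollary must be read in the broader sense, consistent with Theorem \ref{main theorem}(iii) and Remark \ref{regular point}, that threshold regularity excludes the existence of \emph{any} nontrivial solution of $H_\mu(p) f = M(p) f$ (whether an $L^2$ eigenfunction or an $L^1\setminus L^2$ resonance state). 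Under this reading the two lemmas combine immediately to yield the equivalence.
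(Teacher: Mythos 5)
Your proof is correct, and you have spotted something that the paper's terse treatment glosses over: the corollary as literally typeset cannot be an equivalence if ``trivial solution'' is interpreted strictly in $L^2(\T^3)$. When $\mu=\mu(p)$ and $\varphi(\mathbf{q}_0(p))\ne 0$, the only candidate solution $f(q)=C\mu\varphi(q)/(M(p)-w_p(q))$ behaves like $|q-\mathbf{q}_0(p)|^{-2}$ near the maximum, which lies in $L^1(\T^3)\setminus L^2(\T^3)$ in dimension three; hence the equation has \emph{no} nontrivial $L^2$ solution even though $\mu=\mu(p)$. That is exactly the case Lemma \ref{lresonance} calls a resonance, and it shows the ``only if'' direction fails under the literal $L^2$ reading. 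Your resolution --- read the corollary as asserting the absence of \emph{any} nontrivial solution, whether an $L^2$ eigenfunction or an $L^1\setminus L^2$ resonance state, which is the reading forced by Definition \ref{defregular} taken together with Theorem \ref{main theorem}(iii)--(v) needing to be mutually exclusive --- is the right one, and under it Lemmas \ref{lresonance} and \ref{leigenvalue} immediately give the equivalence, exactly as the paper intends (the paper states the corollary without proof, as a direct consequence of those two lemmas).

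Your forward direction also adds a small piece the paper leaves implicit: you handle the case $C:=(f,\varphi)=0$ separately, using that $\{w_p=M(p)\}=\{\mathbf{q}_0(p)\}$ is a null set, and in the case $C\ne 0$ you re-derive $\mu\Omega(p)=1$ to reach the contradiction. This is a clean self-contained computation and is entirely consistent with the mechanism of Lemmas \ref{lresonance} and \ref{leigenvalue}; it buys nothing beyond what those lemmas already establish but makes the chain of reasoning transparent. One very minor wording point: you invoke ``the explicit form \eqref{f}'' for the solution; \eqref{f} is stated for $z\notin\sigma_{ess}(H_\mu(p))$, while here $z=M(p)$ is the threshold, so strictly speaking you are reproducing the derivation of \eqref{f} at $z=M(p)$ (as in \eqref{ressolution}/\eqref{eigsolution}) rather than citing it; the computation is identical, so this is cosmetic.
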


In the following lemma we establish an expansion for
$\Delta(\mu,p\,;z)$ in a half-neighborhood $(M(p), \allowbreak
M(p)+\delta)$ of the point $z=M(p).$

\begin{lemma}\label{representation}
Assume Hypothesis \ref{Hyp1}. Then for any $\mu>0,$ $p\in
U_{\delta}(p_{\,0})$ and sufficiently small  $z-M(p)>0$ the function
$\Delta(\mu,p; \cdot)$ can be represented as  following convergent
series \be\label{30}\Delta(\mu,p\,;z)=1-\mu
\Omega(p)+\mu\frac{\pi\tau_0(p)}{2}(z-M(p))^{1/2}-\mu\sum\limits_{n=2}^{\infty}
c_n(p)(z-M(p))^{n/2},\ee
$$ \tau_0(p)=\varphi^2(\mathbf{q}_0(p))J(\mathbf{q}_0(p)),$$
\end{lemma}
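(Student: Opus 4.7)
The plan is to piggy-back directly on the analysis already carried out in the proof of Lemma \ref{integral} and then apply Proposition \ref{proposition} term-by-term to the Pizetti series. Start from the decomposition $\Omega(p,z)=\Omega_1(p,z)+\Omega_2(p,z)$ from \eqref{Omega}. Since $w_p(s)<M(p)$ uniformly on $\T^3\setminus U(\mathbf{q}_0(p))$ by Hypothesis \ref{Hyp1}, the integrand of $\Omega_2$ is jointly analytic for $(p,z)$ in a complex neighborhood of $(p,M(p))$, so $\Omega_2(p,z)$ is analytic in $z$ near $M(p)$ and its Taylor expansion contributes only integer powers of $z-M(p)$. Hence the whole singular behavior of $\Delta(\mu,p;z)$ at $z=M(p)$ is generated by $\Omega_1(p,z)$.

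For $\Omega_1(p,z)$ I reuse the Morse-lemma passage from Lemma \ref{integral}, which after spherical coordinates and the Pizetti expansion \eqref{iii} gives
\[
\Omega_1(p,z)=\sum_{n=0}^{\infty}\tau_n(p)\int_{0}^{\gamma}\frac{r^{2n+2}\,dr}{r^2+(z-M(p))}.
\]
Setting $\zeta=M(p)-z<0$ for $z>M(p)$, each inner integral is precisely $I_{n+1}(\zeta)$, and Proposition \ref{proposition} yields
\[
I_{n+1}(\zeta)=\tfrac{\pi}{2}\cdot\tfrac{\zeta^{n+1}}{\sqrt{-\zeta}}+\tilde I_{n+1}(\zeta)=\tfrac{\pi}{2}(-1)^{n+1}(z-M(p))^{n+\frac{1}{2}}+\tilde I_{n+1}(\zeta),
\]
with $\tilde I_{n+1}$ analytic in a neighborhood of $\zeta=0$. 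Summing in $n$ and collecting the half-integer terms versus the analytic ones, one obtains
\[
\Omega_1(p,z)=A(p,z)+(z-M(p))^{1/2}B(p,z),
\]
where $A(p,z)=\sum_{n}\tau_n(p)\tilde I_{n+1}(\zeta)$ is analytic at $z=M(p)$ and $B(p,z)=\frac{\pi}{2}\sum_{n\ge 0}(-1)^{n+1}\tau_n(p)(z-M(p))^n$ is analytic at $z=M(p)$ with $B(p,M(p))=-\frac{\pi\tau_0(p)}{2}$.

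Writing $F(p,z):=A(p,z)+\Omega_2(p,z)$, which is analytic with $F(p,M(p))=\Omega(p)$ by \eqref{Omega_1}, I then expand both $F(p,\cdot)$ and $B(p,\cdot)$ in ordinary Taylor series at $z=M(p)$. The integer-power contributions come from $F(p,z)-\Omega(p)$, while the higher half-integer contributions come from $(z-M(p))^{1/2}\bigl(B(p,z)-B(p,M(p))\bigr)$, which starts at order $(z-M(p))^{3/2}$. Singling out the leading half-integer term $-\tfrac{\pi\tau_0(p)}{2}(z-M(p))^{1/2}$ explicitly and regrouping all remaining monomials by their common half-integer exponent $n/2$, $n\ge 2$, produces the stated series for $\Omega(p,z)$; multiplying by $-\mu$ and adding $1$ yields \eqref{30}. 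The main obstacle I anticipate is the bookkeeping for convergence of the resulting double series: one needs to verify that the Taylor coefficients of the analytic factors $\tilde I_{n+1}(\zeta)$ in Proposition \ref{proposition}, combined with the geometric decay of the Pizetti coefficients $\tau_n(p)$ inherited from the analyticity of $\varphi^2\psi J$ on $W_\gamma(0)$, permit the interchange of summation orders on a common half-neighborhood $(M(p),M(p)+\delta)$; once this uniform domination is in place, the rearrangement that defines the coefficients $c_n(p)$ is legitimate and the series is absolutely convergent there.
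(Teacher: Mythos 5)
Your proposal follows the paper's proof almost verbatim: decompose $\Omega=\Omega_1+\Omega_2$, reuse the Morse--Pizetti expansion \eqref{x}, apply Proposition \ref{proposition} with $\zeta=M(p)-z$ term-by-term, collect the analytic part into $F$ with $F(p,M(p))=\Omega(p)$, isolate the leading $(z-M(p))^{1/2}$ term, regroup the rest into powers $(z-M(p))^{n/2}$, $n\ge 2$, and finish via \eqref{Det.H.lamb}. The convergence issue you flag at the end is indeed glossed over in the paper too, and it is settled exactly as you suggest: real-analyticity of $\varphi^2(\psi(\cdot,p))J(\psi(\cdot,p))$ on $\overline{W_\gamma(0)}$ forces geometric decay of the $\tau_n(p)$, giving the absolute convergence that legitimizes the rearrangement.
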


\begin{proof}
According to \eqref{x} and Proposition \ref{proposition} the
function $\Omega_1(p;z)$ can be written as
\be\label{22}\Omega_1(p,z)=
-\frac{\pi\tau_0(p)}{2}(z-M(p))^{1/2}+\sum\limits_{n=1}^{\infty}
\tilde{c}_n(p) (z-M(p))^{n+1/2}+\tilde{F}(p,z),\ee where
$\tilde{{F}}(p,z)$ is an analytic function at the point  $z=M(p)$
and
$$\tilde{c}_n(p)=\dfrac{(-1)^{n+1} \pi \tau_n(p)}{2}.$$

Consequently, the decomposition \eqref{Omega} yields for
$\Omega(p,z), z\in [M(p),\, M(p)+\delta)$ the following representation
$$\Omega(p,z)=
-\frac{\pi\tau_0(p)}{2}(z-M(p))^{1/2}+\sum\limits_{n=1}^{\infty}
\tilde {c}_n(p) (z-M(p))^{n+\frac{1}{2}S}+{F}(p,z),$$
where $F(p,z)=\tilde{{F}}(p,z)+\Omega_2(p,z)$
is analytic function at the point  $z=M(p).$

Notifying
$F(p,M(p))=\Omega(p)$ and $(z-M(p))^{1/2}>0$ for  $z>M(p)$, we
obtain
$$\Omega(p,z)=
\Omega(p)-\frac{\pi\tau_0(p)}{2}(z-M(p))^{1/2}+\sum\limits_{n=2}^{\infty}
c_n(p)(z-M(p))^{n/2}.$$  The equality
\eqref{Det.H.lamb} proves Lemma \ref{representation}.

\end{proof}

Now we prove the main results.

\begin{proof} [\bf Proof of Theorem \ref{main theorem}]
{(i)}  Let $\mu>\mu(p).$
Then Lemma \ref{representation} gives that
$$\lim\limits_{z\to M(p)+0}\Delta(\mu,p\,;z)=\Delta(\mu,p\,;M(p))=1-\frac{\mu}{\mu(p)}<0.$$
The function $ \Delta(\mu,p\,;\cdot)$ is continuous and monotonously
increasing in $z\in (M(p), +\infty)$ and
\be\label{lim}\lim\limits_{z\to+\infty}\Delta(\mu,p\,;z)=1.\ee
Whence, $ \Delta(\mu,p\,;z)=0$ for a unique $z\in (M(p), +\infty).$

Let $\Delta(\mu,p\,;z)=0$ for some $z\in (M(p), +\infty)$. Then
\begin{equation}\label{determinant}
1-\frac{\mu}{\mu(p)}=\Delta(\mu,p\,;M(p))<\Delta(\mu,p \,;z)=0
\end{equation}
which yields that $\mu>\mu(p)$. The Lemma \ref{leigenvalue} ended
the proof of the statement.

Since $z=E(\mu,p) $ is a solution of the equation
$\Delta(\mu,p\,;z)=0$ and $\Delta(\mu,\cdot;z)$ (resp.
$\Delta(\cdot,p\,;z)$) is real-analytic in $U_{\delta}(p_{\,0})$
(resp. $(\mu(p),+\infty)$), the implicit function theorem implies
that $E(\mu,\cdot)$ (resp. $E(\cdot,p)$) is real analytic in
$U_{\delta}(p_{\,0})$  (resp. $(\mu(p),+\infty)$).

Note that $p\in U_{\delta}(p_{\,0})$ the function
$\Delta(\cdot,p;z)$  monotonously decreases  in $(\mu(p),+\infty)$
and hence  the solution (eigenvalue) $ E(\mu,p)$  also monotonously
decreases  in $(\mu(p),+\infty)$.

Lemma \ref{eigenvalue} implies that if the number  $E(\mu,p) $ is an
eigenvalue of $H_\mu(p),$ $p\in U_{\delta}(p_{\,0}),$ then the
function
$$\Psi(\mu;p,\cdot,E(\mu,p))=\frac{C\mu
\varphi(\cdot)}{E(\mu,p)-w_p(\cdot)} ,$$ where $C\neq0$ is a
normalization constant, is a solution of the equation
$$H_{\mu}(p)\Psi(\mu;p,q,E(\mu,p))=
E(\mu,p)\Psi(\mu;p,q,E(\mu,p)).$$

The analyticity of  $\Psi(\mu;p,\cdot,E(\mu,p))$ follows from the
analyticity of $\varphi(\cdot)$ and $(w_p(\cdot)-E(\mu,p))^{-1}$ in
$\T^3$.

Since the functions $E(\mu,\cdot)$ (resp. $E(\cdot,p)$) and
$w(\cdot,q)$ are analytic in $U_{\delta}(p_{\,0})$ (resp.
$(\mu(p),+\infty)$) and $w_p(q)-E(\mu,p)>0$ the mapping $p\mapsto
\Psi(\mu;p,q,E(\mu,p))$ (resp. $\mu\mapsto \Psi(\mu;p,q,E(\mu,p))$)
is also analytic mapping in $U_{\delta}(p_{\,0})$ (resp.
$(\mu(p),+\infty)$).

We can prove the rest part of statements of Theorem \ref{main
theorem} applying Lemmas \ref{lresonance} and \ref{leigenvalue} by
the same way as the proof of $(i)$.
\end{proof}
\section{Acknowledgments}
The first author would like to thank the School of Mathematical
Sciences, Faculty of Science and Technology, University Kebangsaan
Malaysia for the invitation and hospitality. The work was supported
by the
Fundamental Science Foundation of Uzbekistan and by the grant no.\\
ERGS/1/2/2013/STG06/UKM/01/2.

\end{document}